\newtheorem{theorem}{\textbf{Theorem}}
\newtheorem{lemma}{\textbf{Lemma}}
\newtheorem{remark}{\textbf{Remark}}
\begin{document}

\title{Artificial Interference Aided Physical Layer Security in Cache-enabled Heterogeneous Networks}
\author{Wu Zhao, Zhiyong Chen, Kuikui Li, and Bin Xia\\
Cooperative Medianet Innovation Center,  Shanghai Jiao Tong University,  Shanghai,  P. R. China\\
Email: \{zhaowu,zhiyongchen,kuikuili,bxia\}@sjtu.edu.cn}
\maketitle
\begin{abstract}
Caching popular contents is a promising way to offload the mobile data traffic in wireless networks, but so far the potential advantage of caching in improving physical layer security (PLS) is rarely considered. In this paper, we contribute to the design and theoretical understanding of exploiting the caching ability of users to improve the PLS in a wireless
heterogeneous network (HetNet). In such network, the base station (BS) ensures the secrecy of communication by utilizing some of the available power to transmit a pre-cached file, such that only the eavesdropper's channel is degraded. Accordingly, the node locations of BSs, users and eavesdroppers are first modeled as mutually independent poisson point processes (PPPs) and the corresponding file access protocol is developed. We then derive analytical expressions of two metrics, average secrecy rate and secrecy coverage probability, for the proposed system. Numerical results are provided to show the significant security advantages of the proposed network and to characterize the impact of network resource on the secrecy metrics.
\end{abstract}
\section{Introduction}
With the thriving development of mobile paying and internet of things, the privacy and security of wireless communication networks have become one of the most important issues. However, the broadcast nature of wireless channel leads to severe security vulnerabilities such as eavesdropping and jamming \cite{shiu2011physical}. To overcome these shortages, physical layer security (PLS) has emerged as a promising technology to complement and augment the security of wireless networks.

In \cite{wyner1975wire}, Wyner shows that when the eavesdropping channel is degraded than the main legitimating channel, the secrecy of communication can be perfectly guaranteed at a non-zero rate. And first, characterizes the maximal achievable secrecy rate as `\emph{secrecy capacity}' of the discrete wiretap channel.
Further, various efficient approaches are proposed to improve the secrecy capacity, e.g., artificial noise adding \cite{goel2008guaranteeing}, and relay cooperating\cite{dong2010improving}.
By exploiting multi-input single-output techniques, \cite{goel2008guaranteeing} proposes an artificial noise assisted beamforming scheme, which imposes the artificial noise into the null space of the legitimating channel to degrade the eavesdropping channel.
One source-destination pair with multiple relays intercepted by multiple eavesdroppers (ERs) is considered in \cite{dong2010improving}. By determining the relay weights, the authors maximize the achievable secrecy rate under different cooperating schemes.

With the popularization of and explosion of small communication equipments, the topology of the wireless network is becoming densely and randomly, which intensifies the concern for secure transmission.
Based on poisson point process (PPP) \cite{andrews2011tractable}, \cite{zhang2013enhancing,wang2013physical,deng2016physical,tang2016jammer} propose various schemes to improve physical layer security in such wireless heterogeneous network (HetNet).
In \cite{zhang2013enhancing}, the authors consider two transmission strategies based on sectoring and beamforming with artificial noise aided and investigate the secrecy capacity of both schemes.
By exchanging the location information between BSs, \cite{wang2013physical} analysis the effect of node locations on the achievable secrecy rate.
In \cite{deng2016physical}, the authors develop a tractable framework to analysis the average secrecy rate in a three-tier sensor network consisting of sensors, access points and sinks.
\cite{tang2016jammer} confound ERs with jamming signal from friendly jammers and artificial noise from full-duplex user. By selecting the jammer selection threshold to maximize secrecy probability.

Recently, caching popular contents at base station (BSs) and users has been introduced as a promising technique to address the mobile data tsunami in wireless networks \cite{Mobile3C,Offloading}.
The authors in \cite{sengupta2015fundamental} propose centralized and decentralized caching algorithms to guarantee secret transmission rate by coded multicast delivery.
Further, \cite{yang2017interference} utilizes the cached files of users as side information to cancel received interference from. However, the potential of caching in improving physical layer security is rarely considered until recently \cite{xiang2016cache}. The authors study the secure cooperative transmission among multiple cache-enabled BSs with the shared video data. Under the secrecy rate constraint, the total transmit power is minimized by jointly optimizing caching and transmission policies.

In this paper, we propose a heuristic scheme to enhance physical layer security in a cache-enabled HetNet by exploiting the caching ability of users. Instead of sending Gaussian noise as \cite{goel2008guaranteeing,zhang2013enhancing}, the BS transmits the target message combined with an artificial interference which is a file pre-cached at user. Since the cached file is known perfectly by the user, this part of interference can be erased as \cite{yang2017interference} while \cite{goel2008guaranteeing} and \cite{zhang2013enhancing} need the orthogonal space of legitimating channel to isolate noise. Meanwhile, ERs are confused by this part of artificial interference due to absence of this file. Specifically, by using stochastic geometry, we model the node locations (BSs, users and ERs) of the three-tier HetNet as mutually independent PPPs. The file access protocol is then proposed based on whether the file is cached or not and whether the user has cache ability or not. Accordingly, we derive analytical expressions of average secrecy rate and secrecy coverage probability for the proposed system in different transmission schemes. Numerical results show that proposed scheme can achieve promising performance in the both ER-dense and ER-sparse scenarios.

\section{System Model}
\subsection{Network Structure}
\begin{figure}[t]
\centering
\includegraphics[width=3.3in, height=1.5in]{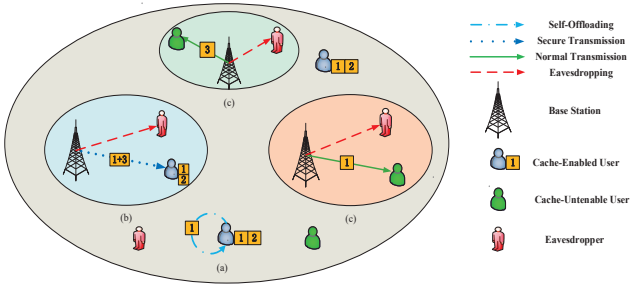}
\caption{System model of wireless cache-enabled heterogeneous networks with eavesdroppers, where $(a),(b)$ and $(c)$ stand of Self-offloading, Secure-transmission and Normal-transmission, respectively.}\label{System}
\end{figure}
As shown in Fig. \ref{System}, we consider a general wireless cache-enabled HetNet consisting three tiers of BSs, users and ERs, where the locations of BSs, users and ERs in each tier are spatially distributed based on independent PPPs, denoted as $\Phi\!_{b}$, $\Phi_{u}$ and $\Phi\!_{e}$ with density $\lambda_{b}$, $\lambda_{u}$ and $\lambda_{e}$, respectively. All nodes operate in single-antenna and we consider the downlink transmission, where time is divided into discrete slots with equal duration and we study one slot of the system. Large-scale fading and small-scale fading are both considered. We use $d^{-\beta}$ to denote large scale fading along the distance $d$, where $2\leqslant\beta\leqslant4$ is the path-loss exponent. For the small-scale fading, we consider the Rayleigh fading channel $h$, i.e., $\mid\!h\!\mid^{2}\sim\exp(1)$.

Consider a file library consisting of $N$ files denoted by $\mathcal{F}\!\triangleq\!\{f_{1},f_{2},\ldots,f_{N}\}$, and all the files are assumed to have equal length $L$. Each user randomly requests a file $f_{i}$ with probability $p_{i}$ and $\sum_{i=1}^{N}{p_{i}}=1$. Without loss of generality (w.l.o.g), we assume $p_1\geq p_2\geq \cdots\geq p_N$. Here, we also consider a file can only be stored entirely rather than partially. Non-colluding ERs intercept information by passive listening signal from BS.

We assume only $\alpha$ part of users have cache ability, where $0\leq\alpha\leq1$. The cache-enabled users also follow a thinning PPP with density $\alpha\lambda_{u}$. The cache-enabled users have same caching size with $(M\times L)bits$, where $M<N$ and cache the same $M$-most popular files out of $\mathcal{F}$ in this paper, which are marked as set $\mathcal{M}\triangleq\{f_{1},f_{2},\ldots,f_{M}\}$. To the aim of tractability, we assume that BSs can access all the files in $\mathcal{F}$ by directly connect to the core-network and neglect the extra cost for BS to fetch files. The set $\mathcal{M}$ will be broadcasted to all users by BSs at off-peak time then pre-stored at cache-enabled users. ERs considered in this paper have no cache ability.
\subsection{File Access Protocol}
Let $\mathcal{Q}$ be the total amount of request from users in $\Phi_{u}$ at one slot. As indicated in Fig. \ref{System}, the file access protocol can be described as follows:

\begin{enumerate}
 \item[(a)]\textbf{Self-offloading:} When a cache-enabled user happens to request a file in $\mathcal{M}$, the request will be satisfied and offloaded immediately from the user's local storage, termed as ``Self-offloading". By denoting the cache hit probability of the request fall in $\mathcal{M}$ as $\delta=\sum_{i=1}^{M} p_{i}$, the amount of this request is $\mathcal{Q}_{SO}={\alpha\delta\mathcal{Q}}$.
 \item[(b)]\textbf{Secure-transmission:} When a cache-enabled user requests $f_{i}$ ($i>M$) from the complementary set $\mathcal{F}
\!\setminus\!\mathcal{M}$ which is denoted as $\mathcal{C}\triangleq\{f_{M+1},f_{M+2},\ldots,f_{N}\}$, the target file $f_{i}$ can be provided by the nearest BS. In order to improve the transmission security, BS can combine the target file $f_i$ with a cached file $f_{m}$, i.e., $f_m\in\mathcal{M}$.\footnote{W.l.o.g, we use $f_1$ as $f_m$ in this paper which is noticed to all cache-enabled users.} Therefore, the transmission signal is $t_{i}\!=\!\sqrt{\theta\!P}x_{i}\!+\!\sqrt{(1\!-\!\theta)P}x_{m}$, where $P$ is the transmission power of BS, $x_{i}$ and $x_{m}$ are the signal of $f_{i}$ and $f_{m}$ with $E(|x_i|^2)=E(|x_m|^2)=1$ respectively, and $\theta\in(0,1]$ is the ratio of power allocation.

Since the pre-cached signal $x_m$ is known perfectly by the cache-enabled user, the user can cancel the extra interference $x_m$. However, $x_m$ is unknown for the ERs and thus can be viewed as a interference. We term this transmission as ``Secure-transmission". And the amount of this request is  $\mathcal{Q}_{ST}=\alpha(1-\delta)\mathcal{Q}$
 \item[(c)]\textbf{Normal-transmission:} When a user does not have
cache ability and requests $f_{i}$ from $\mathcal{F}$, $f_{i}$ will be transmitted by its nearest BS, termed as ``Normal-transmission". The transmission signal is $t_{i}=\sqrt{P}x_{i}$. Moreover, according to which subset of $f_i$ belongs to, the request can be divided into two types: $f_{i}\!\in\!\mathcal{M}$ and $f_{i}\!\in\!\mathcal{C}$.  Therefore the amount of these two types request are $\mathcal{Q}_{NT_{\mathcal{M}}}\!=\!{(1\!-\!\alpha)\delta\mathcal{Q}}$,  $\mathcal{Q}_{NT_{\mathcal{C}}}\!=\!{(1-\!\alpha)(1-\delta)\mathcal{Q}}$, respectively.

\end{enumerate}

In this paper, we assume all the BSs work in the full loaded state due to $\lambda_{u}\gg\lambda_{b}$ and each BS randomly serve one of user requests with equal probability. Therefore, the locations of BS in different states $\{(b),(c)\}$ are distributed as thinning PPPs $\Phi\!_{b_{1}},\Phi\!_{b_{2}},\Phi\!_{b_{3}}$ with density $\lambda_{b_{1}}\!=\!\frac{\mathcal{Q}_{ST}}{\mathcal{Q}_{ST}+\mathcal{Q}_{NT}}\lambda_{b}\!=\!\frac{\alpha(1-\delta)}{1-\alpha\delta}\lambda_{b}$,
$\lambda_{b_{2}}\!=\!\frac{\mathcal{Q}_{NT_\mathcal{M}}}{\mathcal{Q}_{ST}+\mathcal{Q}_{NT}}\lambda_{b}\!=\!\frac{(1-\alpha)\delta}{1-\alpha\delta}\lambda_{b}$ and $\lambda_{b_{3}}\!=\!\frac{\mathcal{Q}_{NT_\mathcal{C}}}{\mathcal{Q}_{ST}+\mathcal{Q}_{NT}}\lambda_{b}\!=\!\frac{(1-\alpha)(1-\delta)}{1-\alpha\delta}\lambda_{b}$ respectively.
\section{Analysis of Transmission Protocol}
According to Slivnyak's theorem \cite{elsawy2013stochastic}, a typical user $u_{0}$ locating at the origin of the Euclidean area does not change the distribution of PPP, no matter with or without caching ability. We also consider that the link between $u_{0}$ and its serving BS $b_{0}$ can be eavesdropped by all ERs in the network.
\subsection{Normal Transmission}
A typical user with no cache ability denoted as $un_{0}$ requests $f_{i}$ from $\mathcal{F}$. The nearest BS $b_{0}$ serves this request within the normal-transmission. Since the interference signals transmitted by other BSs from $\Phi_{b_1},\Phi_{b_2},\Phi_{b_3}$ cannot be cancelled without cached files, the interference $un_{0}$ suffering is equivalent coming from $\{\Phi_{b}\backslash b_{0}\}$ with power $P$.
Therefore the received signal of $un_{0}$ is
\begin{equation}\label{RS_UN}
y\!_{u\!n_{0}}\!\!=\!\!\sqrt{\!P}d_{u\!n_{0},b_{0}}^{-\frac {\beta}{2}}\!h_{u\!n_{0},b_{0}}x_{i}
+\!\!\!\!\!\!\sum_{k\in\{\!\Phi\!_{b}\!\backslash b_{0}\!\}}\!\!\!\!\sqrt{\!P}d_{u\!n_{0},b_{k}}^{-\frac {\beta}{2}}\!h_{u\!n_{0},b_{k}}x_{k'}
\!+n_{0},
\end{equation}
where $d_{u\!n_{0},b_{0}}$ denotes the distance between $u\!n_{0}$ and $b_{0}$, $h_{u\!n_{0},b_{0}}$ ($h_{u\!n_{0},b_{k}}$) represents the Rayleigh fading channel between $u\!n_{0}$ and $b_{0}$ ($b_{k}$), $x_{i}$ ($x_{k'}$\footnote{Note that $x_{k'}$ include secure transmission and normal transmission from BSs in $\{\Phi_{b}\backslash b_0\}$.}) is the transmission signal of $b_{0}$ ($b_k$), and $n_{0}\sim\mathcal{CN}(0,\sigma^{2})$ denotes the additive white Gaussian noise (AWGN). W.l.o.g, the variance of AWGN noise $n_{i}$ is $\sigma^2$ for $i=0,1,2,3$ in this paper.

Therefore the received signal-to-interference-plus-noise ratio (SINR) at $u\!n_{0}$ is
\begin{equation}\label{SINR_UN}
S\!I\!N\!R_{u\!n_{0}}=
\frac{ Pd_{u\!n_{0},b_{0}}^{-\beta}\!\!\mid\!\!h_{u\!n_{0},b_{0}}\!\!\mid^{2}}{\sum\limits_{k\in\{\Phi\!_{b}\!\backslash b_{0}\}}\!\! Pd_{u\!n_{0},b_{k}}^{-\beta}\!\!\mid\!\!h_{u\!n_{0},b_{k}}\!\!\mid^{2}\!\!+\sigma^{2}}.
\end{equation}

For the ER of $un_{0}$, the received signal at an arbitrary ER $e_{j}\in\Phi_{e}$ is similarly given by:
\begin{equation}\label{RS_EN}
y_{e\!_{j}n}\!=\!\sqrt{\!P}d_{e_{j},b_{0}}^{-\frac {\beta}{2}}h_{e_{j}\!,b_{0}}x_{i}
+\!\!\!\!\sum_{k\in\{\Phi\!_{b}\!\backslash b_{0}\}}\!\!\!\!\sqrt{P}d_{e_{j},b_{k}}^{-\frac {\beta}{2}}h_{e_{j},b_{k}}x_{k'}
\!+n_{1}.
\end{equation}

Because $x_{i}$ is eavesdropped signal for $e_{j}$, the SINR of $e_{j}$ can be written as
\begin{equation}\label{SINR_EN}
S\!I\!N\!R_{e_{j}n}=
\frac{ Pd_{e_{j},b_{0}}^{-\beta}\!\!\mid\!\!h_{e_{j},b_{0}}\!\!\mid^{2}}{\sum\limits_{k\in\{\Phi\!_{b}\!\backslash b_{0}\}}\!\!Pd_{e_{j},b_{k}}^{-\beta}\!\!\mid\!\!h_{e_{j},b_{k}}\!\!\mid^{2}\!+\sigma^{2}}.
\end{equation}
\subsection{Secure Transmission}
A typical user with cache ability denoted as $uc_{0}$ requests $f_{i}$ from $\mathcal{C}$. The nearest BS $b_{0}$ will serve this request within the secure-transmission. The received signal at $uc_{0}$ is given by:
\begin{align}
&y\!_{u\!c_{0}}\!=\!\sqrt{\theta\!P}d_{u\!c_{0},b_{0}}^{-\frac {\beta}{2}}h_{u\!c_{0}\!,b_{0}}x_{i}+\!\sqrt{(1-\theta) P}d_{u\!c_{0},b_{0}}^{-\frac {\beta}{2}}h_{u\!c_{0},b_{0}}x_{m}\nonumber\\
&\!{\setlength\arraycolsep{0.3pt}+}\!\!\sum_{j\in\{\!\Phi\!_{b_1}\!\!\backslash\!b_{0}\!\}}\!\!\!\left\{\!\!\sqrt{\theta\!P}d_{u\!c_{0},b_{j}}^{-\frac {\beta}{2}}\!h_{u\!c_{0},b_{j}}x_{j}\!+\!\sqrt{\!(1\!-\!\theta)\!P}d_{u\!c_{0},b_{j}}^{-\frac {\beta}{2}}h_{u\!c_{0},b_{j}}x_m\!\right\}\nonumber\\
&\!{\setlength\arraycolsep{0.3pt}+}\!\!\sum_{k\in\Phi\!_{b_{2}}}\!\!\!\!\sqrt{\!P}d_{u\!c_{0},b_{k}}^{-\frac {\beta}{2}}h_{u\!c_{0},b_{k}}x_{k}
\!+\!\!\!\sum_{l\in \Phi\!_{b_{3}}}\!\!\sqrt{\!P}d_{u\!c_{0},b_{l}}^{-\frac {\beta}{2}}h_{u\!c_{0},b_{l}}x_{l}\!{\setlength\arraycolsep{0.5pt} }+n_{2}\label{RS_UC}.
\end{align}

As described in Section II-B, the pre-cached signal $x_m$ is known perfectly at $uc_0$. And assume that the perfect channel state information (CSI) is fully available at cache-enabled users. Therefore, the $(1\!-\!\theta)$ part of interference from $\Phi_{b_{1}}$ and fully interference from $\Phi_{b_{2}}$ can be cancelled \cite{yang2017interference}. The SINR of $uc_{0}$ is
\begin{equation}\label{SINR_UC}
S\!I\!N\!R_{u\!c_{0}}\!=\!\frac{\theta\!Pd_{u\!c_{0},b_{0}}^{-\beta}\!\!\mid\!\!h_{u\!c_{0},b_{0}}\!\!\mid^{2}}{\!\theta\!P\!\!\!\underbrace{\!\!\sum\limits_{j
\in\{\!\Phi\!_{b_{1}}\!\!\backslash b_{0}\!\}}\!\!\!\!d_{u\!c_{0},b_{j}}^{-\beta}\!\!\mid\!\!h_{u\!c_{0},b_{j}}\!\!\!\mid^{2}}_{I_{\Phi\!_{b_{1}}}}{\setlength\arraycolsep{0.3pt}\!+}
P\underbrace{\!\!\sum\limits_{l\in\Phi\!_{b_{3}}}\!\!d_{u\!c_{0},b_{l}}^{- \beta}\!\!\mid\!\!h_{u\!c_{0},b_{l}}\!\!\!\mid^{2}\!}_{I_{\Phi\!_{b_{3}}}}+\sigma^{2}}.
\end{equation}

For the ER of $uc_{0}$, the received signal $y_{e_{j}c}$ is given by
\begin{align}\label{RS_EC}
y_{e\!_{j}c}&\!=\!\sqrt{\!\theta\!P}d_{e\!_{j}\!,b_{0}}^{-\frac {\beta}{2}}h_{e\!_{j}\!,b_{0}}x_{i}\!+\!\sqrt{\!(1\!-\!\theta) \!P}d_{e\!_{j}\!,b_{0}}^{-\frac {\beta}{2}}h_{e\!_{j}\!,b_{0}}x_m\nonumber\\
&+\!\!\!\!\sum_{k\in\{\Phi\!_{b}\!\backslash b_{0}\!\}}\!\!\!\!\!\!\sqrt{\!P}d_{e\!_{j}\!,b_{k}}^{-\frac {\beta}{2}}h_{e\!_{j}\!,b_{k}}x_{k'}
\!+n_{3},
\end{align}

Thus the SINR of $e_{j}$ can be calculated as
\begin{equation}\label{SINR_EC}
S\!I\!N\!R_{e\!_{j}c}\!=\!
\frac{\theta\!P\!\mid\!\!h_{e\!_{j},b_{0}}\!\!\mid^{2}\!\!d_{e\!_{j},b_{0}}^{-\beta}}{P\!\!\underbrace{\!\!\sum\limits_{k\in\{\Phi\!_{b}\backslash b_{0}\}}\!\!\!\!\!d_{e\!_{j},b_{k}}^{-\beta}\!\!\mid\!\!h_{e\!_{j},b_{k}}\!\!\mid^{2}\!}_{I_{\Phi\!_{b}}}+(1\!-\!\theta)P\!d_{e\!_{j},b_{0}}^{-\beta}
\!\!\mid\!\!h_{e\!_{j},b_{0}}\!\!\mid^{2}\!+\sigma^{2}\!}.
\end{equation}

\begin{remark}
We can observe from $(\ref{SINR_EC})$ that the expression has the form of $\frac{\theta\!X}{C+(1-\theta)X}$, where $X\!=\!P\!\mid\!h_{e_{j},b_{0}}\!\!\mid^{2}\!d_{e_{j},b_{0}}^{-\beta}$ which is a function of variables $h_{e_{j},b_{0}}$ and $d_{e_{j},b_{0}}$, while $C\!=\!P\!I_{\Phi\!_{b}}\!+\!\sigma^{2}$ is not relevant. Therefore we have $S\!I\!N\!R_{e\!_{j}c}\leq \frac{\theta}{1-\theta}\triangleq\gamma_{th_{0}}$.
\end{remark}
\section{Security Metrics Analysis}
In this section, the secrecy performance of two transmission protocols are compared in terms of average secrecy rate and secrecy coverage probability.
\subsection{Average Secrecy Rate}
Consider a link between the user $u_{0}$ and serving BS $b_{0}$ being intercepted by $E\!R\in\Phi_{e}$. We focus on the most detrimental ER which has the highest receive SINR from $b_{0}$.

The instantaneous secrecy rate $\mathcal{C}$ is thus given as
\begin{equation}
\mathcal{C}\triangleq\lceil C_{u}-C_{e}\rceil^\dagger \label{Cs},
\end{equation}
where$\lceil x\rceil^\dagger\!=\max\{x,0\}$. $C_{u}$ and $C_{e}$ are, respectively, the instantaneous capacity of the user's ($u_0$) channel and the most detrimental ER's channel, which can be expressed uniformly as $C_{i}\!=\!\log_2(1\!+\!\gamma_{i})$, $i=u,e$. Here, $\gamma_{e}$ is the instantaneous received SINR of the most detrimental ER, which is given by
\begin{equation}\label{max_SINR_EC}
\gamma_{e}\!=\!\max\limits_{e_{j}\in\Phi_{e}}\{S\!I\!N\!R_{e\!_{j}}\}.
\end{equation}

The average secrecy rate is defined as
\begin{equation}
\overline{\mathcal{C}}\triangleq\!\!\int_0^{\infty}\!\!\!\!\!\int_0^{\infty}\!\!\lceil C_{u}\!-C_{e}\rceil^\dagger \,d\gamma_{u}\! \,d\gamma_{e},
\end{equation}
and can be rewritten as \cite{deng2016physical}
\begin{equation}\label{Capacity1}
\overline{\mathcal{C}}\!=\!\frac{1}{\ln2}\!\int_0^{\infty}\!\!\!\big[1\!-\!F_{\gamma_{u}}(\gamma_{th})\big]\frac{F_{\gamma_{e}}\!(\!\gamma_{th}\!)}{1+\gamma_{th}} \,d\gamma_{th},
\end{equation}
where $F_{\gamma_{u}}$ and ($F_{\gamma_{e}}$) are the cumulative probability functions (CDFs) of $\gamma_{u}$ and ($\gamma_{e}$), respectively. Therefore, the $\overline{\mathcal{C}}$ of two transmission protocols are given as follow.\\

\subsubsection{Secure Transmission}
\begin{lemma}
Let $\gamma_{uc}$ be the SINR of the typical user with cache ability, the CDF of $\gamma_{uc}$ can be calculated as
\begin{align}\label{CDF_UC}
F\!_{\gamma\!_{u\!c}}\!(\!\gamma_{th}\!)\!=\!1\!-\!2\pi\lambda_{b}\!\!\!\int_0^{\infty}\!\!\!\!\!\!x\!\exp\!\Big\{\!&\!\!-\!\pi x^2\!\big[\!\mathcal{Z}(\gamma_{th})\lambda_{b_{1}}\!\!+\!\mathcal{Z}(\frac{\gamma_{th}}{\theta})\lambda_{b_{3}}\!+\!\lambda_{b}\big]\nonumber\\
&\!\!-\!\frac{\sigma^2}{\theta\!P}\gamma_{th}x^{\beta}\Big\}\!\,dx,
\end{align}
where $\mathcal{Z}(\!\gamma_{th}\!)\!=\!\frac{2\gamma_{th}}{\beta-2}{}_{2}F_{1}[1,1\!-\!\frac{2}{\beta};2\!-\!\frac{2}{\beta};-\gamma_{th}]$, ${}_{2}F_{1}[\cdot]$ is the Gauss hypergeometric function.
\end{lemma}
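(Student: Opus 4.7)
The plan is to condition on the distance $R\equiv d_{uc_{0},b_{0}}$ to the serving BS, exploit the exponential distribution of $|h_{uc_{0},b_{0}}|^{2}$ to turn the tail probability into an expectation of an exponential, and then factor the expectation into the Laplace transforms of the two independent interference terms $I_{\Phi_{b_{1}}}$ and $I_{\Phi_{b_{3}}}$ appearing in the denominator of $(\ref{SINR_UC})$. Concretely, because $b_{0}$ is the nearest point of the parent PPP $\Phi_{b}$, the distance $R$ has density $f_{R}(r)=2\pi\lambda_{b}r\exp(-\pi\lambda_{b}r^{2})$. Conditional on $R=r$ and on $|h_{uc_{0},b_{0}}|^{2}\sim\exp(1)$, I would write
\begin{equation*}
\Pr\!\big(\gamma_{uc}\!\geq\!\gamma_{th}\,\big|\,R\!=\!r\big)=\exp\!\Big(\!-\tfrac{\gamma_{th}r^{\beta}\sigma^{2}}{\theta P}\Big)\,\mathcal{L}_{I_{\Phi_{b_{1}}}}\!(\gamma_{th}r^{\beta})\,\mathcal{L}_{I_{\Phi_{b_{3}}}}\!\big(\tfrac{\gamma_{th}r^{\beta}}{\theta}\big),
\end{equation*}
where the two Laplace transforms factor because the independent thinning of $\Phi_{b}$ into $\Phi_{b_{1}},\Phi_{b_{2}},\Phi_{b_{3}}$ makes $I_{\Phi_{b_{1}}}$ and $I_{\Phi_{b_{3}}}$ independent conditional on $R$.

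Next, I would evaluate each Laplace transform using the PGFL of a PPP. Since the event $R=r$ forces $\Phi_{b}$ to be void on the disc of radius $r$ around the origin, the restrictions of $\Phi_{b_{1}}$ and $\Phi_{b_{3}}$ to the exterior are still PPPs with densities $\lambda_{b_{1}}$ and $\lambda_{b_{3}}$. Applying the PGFL gives, for $k\in\{1,3\}$ and $s=\gamma r^{\beta}$,
\begin{equation*}
\mathcal{L}_{I_{\Phi_{b_{k}}}}\!(s)=\exp\!\Big(\!-2\pi\lambda_{b_{k}}\!\!\int_{r}^{\infty}\!\!\Big(1-\tfrac{1}{1+sv^{-\beta}}\Big)v\,dv\Big).
\end{equation*}
Substituting $u=v/r$ and rearranging leaves the dimensionless integral $\int_{1}^{\infty}\gamma u/(u^{\beta}+\gamma)\,du$, which via $t=\gamma/u^{\beta}$ reduces to an incomplete integral $\int_{0}^{\gamma} t^{-2/\beta}/(1+t)\,dt$ whose standard Euler-type representation yields the hypergeometric identity $\int_{1}^{\infty}\gamma u/(u^{\beta}+\gamma)\,du=\tfrac{\gamma}{\beta-2}\,{}_{2}F_{1}[1,1-\tfrac{2}{\beta};2-\tfrac{2}{\beta};-\gamma]$. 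Hence $\mathcal{L}_{I_{\Phi_{b_{k}}}}(\gamma r^{\beta})=\exp(-\pi\lambda_{b_{k}}r^{2}\mathcal{Z}(\gamma))$, with $\gamma=\gamma_{th}$ for $k=1$ and $\gamma=\gamma_{th}/\theta$ for $k=3$.

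To finish, I would de-condition by integrating against $f_{R}$. Combining the Gaussian-style weight $\exp(-\pi\lambda_{b}r^{2})$ from $f_{R}$ with the two PGFL exponentials produces the single quadratic exponent $-\pi r^{2}[\lambda_{b}+\lambda_{b_{1}}\mathcal{Z}(\gamma_{th})+\lambda_{b_{3}}\mathcal{Z}(\gamma_{th}/\theta)]$ inside the integrand, while the thermal-noise factor supplies the remaining $-\tfrac{\sigma^{2}}{\theta P}\gamma_{th}r^{\beta}$ term, after which $F_{\gamma_{uc}}(\gamma_{th})=1-\Pr(\gamma_{uc}\geq\gamma_{th})$ matches $(\ref{CDF_UC})$ exactly with $x=r$.

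I expect the only genuinely delicate step to be the reduction of the PGFL integral to the ${}_{2}F_{1}$ form defining $\mathcal{Z}$; the algebra is standard but must be done carefully to handle the $\theta$ rescaling correctly in the $\Phi_{b_{3}}$ term. A secondary subtlety worth verifying, rather than waving away, is the conditional independence and density-preservation of the three thinned processes on $\{\|v\|>r\}$ given the serving-BS distance, since the serving BS was selected from the \emph{parent} process $\Phi_{b}$ rather than from $\Phi_{b_{1}}$ alone; this follows from the independent-thinning construction and the void-probability characterization of the PPP, and guarantees that no extra conditioning bias appears in the densities used inside the PGFL.
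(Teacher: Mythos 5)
Your proposal is correct and follows essentially the same route as the paper's Appendix A: condition on the nearest-BS distance with density $2\pi\lambda_{b}x e^{-\pi\lambda_{b}x^{2}}$, use the exponential fading to factor the tail probability into the noise term times $\mathcal{L}_{I_{\Phi_{b_{1}}}}(\gamma_{th}x^{\beta})\mathcal{L}_{I_{\Phi_{b_{3}}}}(\gamma_{th}x^{\beta}/\theta)$, evaluate each via the PGFL over the exterior of the disc of radius $x$, and reduce the resulting integral to $\mathcal{Z}(\cdot)$ (your substitution $u=v/r$, $t=\gamma/u^{\beta}$ is equivalent to the paper's $y=(\gamma_{th}x^{\beta})^{-2/\beta}v^{2}$). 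The only difference is that you explicitly flag the thinning/exclusion-region subtlety that the paper asserts without comment, which is a point in your favor rather than a deviation.
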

\begin{proof}
Please refer to Appendix A.
\end{proof}

\begin{lemma}
Let $\gamma_{ec}$ be the SINR of the most detrimental ER of the typical cache-enabled user, the CDF of $\gamma_{ec}$ is written as
\begin{equation}\label{CDF_EC}
F_{\gamma_{ec}}(\gamma_{th})= \\
 \begin{cases}
 \widetilde{F}_{\gamma_{ec}}\!(\gamma_{th}) &0\leq\gamma_{th}\leq\gamma_{th_{0}}\\
 \qquad1   &\qquad\text{else}, \\
 \end{cases}
\end{equation}
where $\widetilde{F}_{\gamma_{ec}}\!(\gamma_{th})$ is
\begin{align}\label{CDF_EC_2}
&\exp\!\!\Bigg\{\!\!\!-\!2\pi\lambda_{e}\!\!\!\int_0^{\infty}\!\!\!\!\!x\!\exp\!\bigg\{\!\!\!-\!\pi\lambda_{b}\Gamma(1\!+\!\frac{2}{\beta})\Gamma(1\!-\!\frac{2}{\beta})
\!\!\left[\!\frac{\gamma_{th}x^\beta}{\theta\!-\!(1\!\!-\!\theta)\gamma_{th}}\!\right]\!\!^{\frac{2}{\beta}} \nonumber \\
&\hspace{30mm} \!-\!{\frac{\sigma^2}{P}}\!\Big[\frac{\gamma_{th}x^\beta}{[\theta-\!(1\!-\theta)\gamma_{th}]}\Big]\!\bigg\}\!\,dx\!\Bigg\}.
\end{align}
and $\Gamma[\cdot]$ is the Gamma function.
\begin{proof}
Please refer to Appendix B.
\end{proof}
\end{lemma}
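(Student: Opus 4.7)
The plan is to exploit Remark 1 for the piecewise split, then use standard PPP machinery (PGFL plus Laplace transform of a Rayleigh-faded PPP interference) to obtain $\widetilde{F}_{\gamma_{ec}}$. The case $\gamma_{th}>\gamma_{th_{0}}$ is immediate: Remark 1 gives $SINR_{e_{j}c}\le\gamma_{th_{0}}$ almost surely for every $e_{j}$, so $\gamma_{ec}\le\gamma_{th_{0}}$ and $F_{\gamma_{ec}}(\gamma_{th})=1$. All the work is in the non-trivial branch $0\le\gamma_{th}\le\gamma_{th_{0}}$.

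For that branch, I would first convert the maximum over $\Phi_{e}$ into a product and apply the probability generating functional of the independent PPP $\Phi_{e}$ (which is legitimate because the channels $h_{e_{j},b_{k}}$ are i.i.d.\ across $e_{j}$ and the interference field $\Phi_{b}\setminus b_{0}$ is independent of $\Phi_{e}$). This yields
\begin{equation*}
F_{\gamma_{ec}}(\gamma_{th})=\exp\!\Bigl\{-2\pi\lambda_{e}\!\int_{0}^{\infty}\!\! r\,\mathbb{P}\!\bigl[SINR_{e_{j}c}>\gamma_{th}\mid d_{e_{j},b_{0}}=r\bigr]\,dr\Bigr\}.
\end{equation*}
The conditional tail is where the $\gamma_{th_{0}}$ threshold naturally appears: rearranging \eqref{SINR_EC} turns $SINR_{e_{j}c}>\gamma_{th}$ into
$[\theta-(1-\theta)\gamma_{th}]|h_{e_{j},b_{0}}|^{2}r^{-\beta}>\gamma_{th}(I_{\Phi_{b}}+\sigma^{2}/P)$,
which has a solution only when $\theta-(1-\theta)\gamma_{th}>0$, i.e.\ $\gamma_{th}<\gamma_{th_{0}}$; otherwise the probability is zero and $F_{\gamma_{ec}}=1$, consistent with the piecewise definition.

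In the admissible range, isolating $|h_{e_{j},b_{0}}|^{2}\sim\exp(1)$ reduces the conditional CCDF to $\mathbb{E}[\exp(-A\,I_{\Phi_{b}})]\cdot\exp(-A\sigma^{2}/P)$ with $A=\gamma_{th}r^{\beta}/[\theta-(1-\theta)\gamma_{th}]$. The first factor is the Laplace transform of the aggregate interference from the PPP $\Phi_{b}\setminus b_{0}$ with i.i.d.\ $\exp(1)$ marks; applying the PGFL with the Rayleigh-fading kernel $\mathbb{E}_{h}[e^{-shv^{-\beta}}]=(1+sv^{-\beta})^{-1}$, passing to polar coordinates, and evaluating $\int_{0}^{\infty}\!v/(1+s^{-1}v^{\beta})\,dv$ via the Euler reflection formula produces the familiar closed form $\exp\{-\pi\lambda_{b}\,s^{2/\beta}\Gamma(1+2/\beta)\Gamma(1-2/\beta)\}$. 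Substituting $s=A$ and inserting the noise factor gives exactly the inner exponential in \eqref{CDF_EC_2}.

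The main obstacle I anticipate is bookkeeping rather than any deep step: the algebraic rearrangement that separates the desired-signal fading from the interference must be handled carefully so that the sign condition $\theta-(1-\theta)\gamma_{th}>0$ is recognized as the origin of the piecewise split (and aligned with Remark 1), and the two independence claims used in the PGFL step — that $\Phi_{b}\setminus b_{0}$ is independent of the locations in $\Phi_{e}$ conditional on $b_{0}$, and that the fading is independent across all links — need to be invoked explicitly so the Laplace transform can be pulled inside the product over $\Phi_{e}$. Once these are in place, the closed form follows by substitution and a change of variable $r\mapsto x$ under the outer integral.
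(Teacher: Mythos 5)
Your proposal follows essentially the same route as the paper's Appendix B: handle $\gamma_{th}>\gamma_{th_0}$ via the deterministic bound of Remark 1, convert the maximum over $\Phi_e$ into a product and apply the PGFL, isolate the $\exp(1)$-distributed desired-link fading to reduce the conditional CCDF to $\mathcal{L}_{I_{\Phi_b}}(A)e^{-A\sigma^2/P}$ with $A=\gamma_{th}x^\beta/[\theta-(1-\theta)\gamma_{th}]$, and evaluate the interference Laplace transform by PGFL, the Rayleigh kernel, polar coordinates and the Gamma-function identity. The argument is correct and matches the paper's proof step for step (including the shared, standard simplification of pulling the interference Laplace transform inside the product over eavesdroppers).
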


\begin{theorem}
In the interference-limited scenario, the average secrecy rate for the secure transmission is given by
\begin{align}\label{C_UC}
\overline{\mathcal{C}}\!_{ST}&\!=\!\frac{1}{\ln2}\!\int_0^{\gamma_{t\!h\!_{0}}}\!\!\frac{\exp\!\!\left\{\!-\frac{\lambda_{e}}{\lambda_{b}}\Big/\Gamma(1\!\!+\!\!\frac{2}{\beta})
\Gamma(1\!\!-\!\!\frac{2}{\beta})[\frac{\gamma_{t\!h}}{\theta-(1-\theta)\gamma_{t\!h}}]\!^{\frac{2}{\beta}}\!\!\right\}}{(1+\gamma_{t\!h})
[\mathcal{Z}(\gamma_{t\!h})\frac{\lambda_{b_{1}}}{\lambda_{b}}\!+\!\mathcal{Z}(\frac{\gamma_{t\!h}}{\theta})\frac{\lambda_{b_{3}}}{\lambda_{b}}\!+\!1]} \,d\gamma_{t\!h} \nonumber\\
&+\frac{1}{\ln2}\!\int_{\gamma_{t\!h\!_{0}}}^{\infty}\!\!\frac{\,d\gamma_{t\!h}}{(1\!+\!\gamma_{t\!h})
[\mathcal{Z}\!(\!\gamma_{t\!h}\!)\!\frac{\lambda_{b_{1}}}{\lambda_{b}}\!+\!\mathcal{Z}\!(\!\frac{\gamma_{t\!h}}{\theta}\!)\!\frac{\lambda_{b_{3}}}{\lambda_{b}}\!+\!1]}.
\end{align}
\end{theorem}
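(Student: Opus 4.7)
The plan is to substitute the CDFs from Lemmas~1 and~2 into the general average-secrecy-rate formula~(12), specialize both expressions to the interference-limited regime ($\sigma^2\to 0$), and then split the resulting integral at the SINR ceiling $\gamma_{th_0}=\theta/(1-\theta)$ identified in Remark~1.

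First I would take~(12) with $\gamma_u=\gamma_{uc}$ and $\gamma_e=\gamma_{ec}$ and work on the factor $1-F_{\gamma_{uc}}(\gamma_{th})$. In the interference-limited scenario, the noise term $\tfrac{\sigma^2}{\theta P}\gamma_{th}x^\beta$ inside the exponential in~(13) vanishes, leaving a pure Gaussian-type integrand in $x$. Writing $A(\gamma_{th})\triangleq \mathcal{Z}(\gamma_{th})\lambda_{b_1}+\mathcal{Z}(\gamma_{th}/\theta)\lambda_{b_3}+\lambda_b$, the standard identity $\int_0^\infty x\,e^{-\pi A x^2}\,dx = \tfrac{1}{2\pi A}$ collapses the integral, yielding
\begin{equation*}
1-F_{\gamma_{uc}}(\gamma_{th}) = \frac{\lambda_b}{A(\gamma_{th})} = \frac{1}{\mathcal{Z}(\gamma_{th})\tfrac{\lambda_{b_1}}{\lambda_b}+\mathcal{Z}(\gamma_{th}/\theta)\tfrac{\lambda_{b_3}}{\lambda_b}+1},
\end{equation*}
which matches exactly the denominator appearing in both integrals of the theorem.

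Next I would do the analogous simplification on $\widetilde{F}_{\gamma_{ec}}(\gamma_{th})$ from~(14). Setting $\sigma^2=0$ removes the linear-in-$x^\beta$ term inside the inner exponential, leaving only the quadratic-in-$x$ term. Denoting $B(\gamma_{th})\triangleq\pi\lambda_b\Gamma(1+\tfrac{2}{\beta})\Gamma(1-\tfrac{2}{\beta})\bigl[\tfrac{\gamma_{th}}{\theta-(1-\theta)\gamma_{th}}\bigr]^{2/\beta}$, the same Gaussian identity gives $\int_0^\infty x\,e^{-B x^2}\,dx = 1/(2B)$, so the outer exponent reduces to $-\pi\lambda_e/B$, producing the exponential numerator $\exp\{-(\lambda_e/\lambda_b)/[\Gamma(1+2/\beta)\Gamma(1-2/\beta)(\gamma_{th}/(\theta-(1-\theta)\gamma_{th}))^{2/\beta}]\}$ in the first piece of the stated formula.

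Finally I would handle the piecewise structure of $F_{\gamma_{ec}}$ dictated by Remark~1. Because $\mathrm{SINR}_{e_j c}$ is bounded above by $\gamma_{th_0}=\theta/(1-\theta)$, the maximum over the ER PPP is also bounded, so $F_{\gamma_{ec}}(\gamma_{th})$ equals the reduced exponential on $[0,\gamma_{th_0}]$ and equals $1$ for $\gamma_{th}>\gamma_{th_0}$. Splitting the range of the outer integral in~(12) at $\gamma_{th_0}$ and substituting the two simplified factors in each sub-interval yields precisely~(16). The only potential subtlety, and the step I would verify most carefully, is the behaviour at the endpoint $\gamma_{th}\to\gamma_{th_0}^{-}$: the bracket $[\gamma_{th}/(\theta-(1-\theta)\gamma_{th})]^{2/\beta}$ diverges so its reciprocal vanishes, making the exponential factor tend to $1$ and matching the value of $F_{\gamma_{ec}}=1$ taken on $[\gamma_{th_0},\infty)$, so the integrand is continuous at the split point and no boundary term is incurred. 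Everything else is routine substitution, so I do not anticipate a genuine obstacle beyond bookkeeping of the normalizing constants $\lambda_{b_1}/\lambda_b$ and $\lambda_{b_3}/\lambda_b$.
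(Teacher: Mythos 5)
Your proposal is correct and follows essentially the same route as the paper, which simply substitutes the CDFs of Lemmas~1 and~2 into~(\ref{Capacity1}), drops the noise terms in the interference-limited regime, and splits the integral at $\gamma_{th_0}$ where $F_{\gamma_{ec}}$ becomes identically~$1$. Your explicit use of $\int_0^\infty x e^{-\pi A x^2}\,dx = \tfrac{1}{2\pi A}$ and the continuity check at $\gamma_{th}\to\gamma_{th_0}^-$ are exactly the details the paper leaves implicit.
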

\begin{proof}
By substituting $(\ref{CDF_UC})$ and $(\ref{CDF_EC})$ into $(\ref{Capacity1})$, it is easy to obtain this theorem.
\end{proof}
\subsubsection{Normal Transmission}
\begin{lemma}
Let $\gamma_{un}$ as the SINR of the typical user without cache ability. Similar to $(\ref{CDF_UC})$, the CDF of $\gamma_{un}$ is
\begin{align}\label{CDF_UN}
&F\!_{\gamma\!_{u\!n}}\!(\!\gamma_{th}\!)\!=\!1\!\!-\!2\pi\lambda_{b}\!\!\!\int_0^{\infty}\!\!\!\!\!\!x\!\exp\!\!\left[\!-\pi\!\lambda_{b} x^2\!(\mathcal{Z}\!(\gamma_{th})\!+\!\!1\!)\!-\!\frac{\sigma^2}{P}\gamma_{th}x^\beta\!\right]\!\!\,dx.
\end{align}
\end{lemma}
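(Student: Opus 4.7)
The plan is to mimic the derivation of Lemma~1 (Appendix~A) but specialized to the simpler interference structure of normal transmission, where no cancellation occurs and every interfering BS in $\Phi_b\setminus b_0$ contributes with the same power $P$. First I would condition on $r\triangleq d_{un_0,b_0}$, whose distribution is the standard nearest-neighbour contact distribution of a PPP of intensity $\lambda_b$, namely $f_r(x)=2\pi\lambda_b x\exp(-\pi\lambda_b x^2)$. This factor will produce the outer $2\pi\lambda_b\int_0^\infty x\exp(-\pi\lambda_b x^2)(\cdots)\,dx$ after de-conditioning.

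Next I would compute the conditional complementary CDF $\mathbb{P}(\gamma_{un}>\gamma_{th}\mid r)$. Because $|h_{un_0,b_0}|^2\sim\exp(1)$, rearranging $(\ref{SINR_UN})$ gives
\begin{equation*}
\mathbb{P}(\gamma_{un}>\gamma_{th}\mid r)=\mathbb{E}\!\left[\exp\!\left(-\tfrac{\gamma_{th}r^{\beta}}{P}(I+\sigma^2)\right)\right],
\end{equation*}
where $I=\sum_{k\in\Phi_b\setminus b_0} P\,d_{un_0,b_k}^{-\beta}|h_{un_0,b_k}|^2$. This factors into a deterministic noise term $\exp(-\sigma^2\gamma_{th}r^\beta/P)$ and the Laplace transform $\mathcal{L}_I(s)$ of the aggregate interference at $s=\gamma_{th}r^\beta/P$.

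The crucial step is evaluating $\mathcal{L}_I$ via the PGFL of the PPP $\Phi_b\setminus b_0$ (conditioned on no BS closer than $r$, which is exactly what the nearest-BS assumption enforces). The Rayleigh-fading expectation over each $|h_{un_0,b_k}|^2$ first yields $\mathbb{E}_\Phi\!\prod_k\tfrac{1}{1+s P\,d_k^{-\beta}}$, and converting to polar coordinates together with the PGFL gives
\begin{equation*}
\mathcal{L}_I(s)=\exp\!\left(-2\pi\lambda_b\!\int_r^\infty\!\frac{v\,dv}{1+v^{\beta}/(\gamma_{th}r^\beta)}\right)=\exp\!\left(-\pi\lambda_b r^2\mathcal{Z}(\gamma_{th})\right),
\end{equation*}
where the last equality uses the substitution $u=(v/r)^\beta/\gamma_{th}$ and the integral representation of ${}_2F_1$ that defines $\mathcal{Z}$. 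Note that, in contrast to Lemma~1, here the full intensity $\lambda_b$ appears (rather than $\lambda_{b_1}$ and $\lambda_{b_3}$ with separate arguments), because every interferer uses the same power $P$ and none of the $x_m$ signals are cancelled.

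Finally, multiplying the noise factor, the Laplace-transform factor, and the contact-distance density, then integrating over $r\in(0,\infty)$ and taking the complement, gives exactly $(\ref{CDF_UN})$. The only technical care needed is the conditioning on an empty ball of radius $r$ around the origin when applying the PGFL; since this is handled identically to Appendix~A, I do not expect any additional obstacle beyond bookkeeping.
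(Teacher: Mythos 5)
Your proposal is correct and follows essentially the same route as the paper, which proves this lemma only implicitly by pointing to the Appendix~A derivation of Lemma~1: condition on the nearest-BS distance, factor out the noise term via the Rayleigh fading, evaluate the interference Laplace transform by the PGFL over the exterior of the ball of radius $r$ with the full intensity $\lambda_b$, and recognize $\mathcal{Z}(\gamma_{th})$ after the change of variables. Your observation that the single term $\lambda_b\mathcal{Z}(\gamma_{th})$ replaces the split $\lambda_{b_1}\mathcal{Z}(\gamma_{th})+\lambda_{b_3}\mathcal{Z}(\gamma_{th}/\theta)$ because no interference is cancelled is exactly the intended specialization.
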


\begin{lemma}
Let $\gamma_{en}$ as the SINR of the most detrimental ER of the typical user without cache ability. Similar to $(\ref{CDF_EC})$, it is easy to obtain the CDF of $\gamma_{en}$, which is given by
\begin{align}\label{CDF_EN}
&F\!_{\gamma_{e\!n}}\!(\!\gamma_{th}\!)\!=\!\exp\!\!\Bigg\{\!\!\!-\!2\pi\lambda_{e}\!\!\!\int_0^{\infty}\!\!\!\!\!x\!\exp\!\bigg\{\!\!\!-
\!\pi\lambda_{b}\Gamma(\!1\!+\!\frac{2}{\beta}\!)\Gamma(\!1\!-\!\frac{2}{\beta}\!)
\!\!\left[\!\gamma_{th}x^\beta\right]\!^{\frac{2}{\beta}} \nonumber \\
&\qquad\qquad\qquad\qquad\qquad\qquad\ \ -\!{\frac{\sigma^2}{P}}\gamma_{th}x^\beta\bigg\}\!\!\,dx\Bigg\}.
\end{align}
\end{lemma}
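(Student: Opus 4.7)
The plan is to follow the derivation of Lemma 2 essentially verbatim, exploiting the observation that normal transmission is the $\theta=1$ specialization of secure transmission: under normal transmission the base station $b_0$ devotes all its power to $x_i$, so the artificial-interference self-term $(1-\theta)Pd_{e_j,b_0}^{-\beta}|h_{e_j,b_0}|^2$ that sits in the denominator of $SINR_{e_jc}$ simply vanishes and the useful-signal factor becomes $P$ instead of $\theta P$. Consequently, wherever the ratio $\gamma_{th}/[\theta-(1-\theta)\gamma_{th}]$ appears in Appendix B it collapses to $\gamma_{th}$, and the piecewise truncation at $\gamma_{th_{0}}=\theta/(1-\theta)$ is pushed to $+\infty$ as $\theta\to 1$, which explains why no case distinction survives in (\ref{CDF_EN}).

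To carry out the argument from scratch, first I would write $F_{\gamma_{en}}(\gamma_{th})=\mathbb{P}\bigl(\bigcap_{e_j\in\Phi_e}\{SINR_{e_jn}\leq\gamma_{th}\}\bigr)$ and, using independence of the per-ER Rayleigh fading together with the probability generating functional of the homogeneous PPP $\Phi_e$ of density $\lambda_e$, convert the joint event into an outer exponential of a spatial integral over $\mathbb{R}^2$. Next, for a tagged ER at distance $x$ from $b_0$, the event $\{SINR_{e_jn}>\gamma_{th}\}$ is equivalent to $|h_{e_j,b_0}|^2>\gamma_{th}x^{\beta}\bigl(\sum_{k\in\Phi_b\setminus b_0}d_{e_j,b_k}^{-\beta}|h_{e_j,b_k}|^2+\sigma^2/P\bigr)$; since $|h_{e_j,b_0}|^2\sim\exp(1)$, the conditional success probability factors into $\exp(-\gamma_{th}x^\beta\sigma^2/P)$ multiplied by the Laplace transform of the aggregate BS interference evaluated at $s=\gamma_{th}x^\beta$. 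Averaging over $\Phi_b$ by the standard PGFL computation for a homogeneous PPP with $d^{-\beta}$ path loss and exponential fading yields the closed form $\exp\bigl\{-\pi\lambda_b\Gamma(1+2/\beta)\Gamma(1-2/\beta)(\gamma_{th}x^\beta)^{2/\beta}\bigr\}$. Passing to polar coordinates in the outer integral supplies the $2\pi\lambda_e x\,dx$ element, and collecting factors reproduces exactly the expression stated in (\ref{CDF_EN}).

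The only non-routine step, and therefore the one I would flag as the main obstacle, is the same one that is implicit in Appendix B: justifying the swap of the $\Phi_b$-expectation inside the outer exponential produced by the PGFL of $\Phi_e$. Strictly, different ERs see correlated interference because they share the BS process, so one relies on the usual independent-interference approximation, equivalently replacing each ER's interference field by an independent copy having the correct marginal distribution. Because Lemma 2 already adopts this modelling convention, I would simply invoke the same convention here and note that the remainder of the calculation is the $\theta=1$ reduction of the computation in Appendix B, so no new algebra is required.
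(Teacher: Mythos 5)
Your proposal is correct and follows essentially the same route as the paper: the paper itself gives no separate proof for this lemma, stating only that it is obtained ``similar to'' Lemma 2, and your reduction to the $\theta=1$ case of Appendix B (which collapses $\gamma_{th}/[\theta-(1-\theta)\gamma_{th}]$ to $\gamma_{th}$ and pushes the truncation point $\gamma_{th_0}$ to infinity) is exactly that argument. Your remark about the correlation of interference seen by different eavesdroppers through the shared BS process is a fair observation about an approximation the paper also makes implicitly in Appendix B, but it does not change the method.
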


\begin{theorem}
In the interference-limited scenario, the average secrecy rate for the normal transmission is derived as
\begin{align}\label{C_UN}
\overline{\mathcal{C}}\!_{N\!T}&{\setlength\arraycolsep{0.3pt}=}\frac{1}{\ln2}\!\int_0^{\infty}\!\frac{\exp\!\left\{\!-\frac{\lambda_{e}}{\lambda_{b}}\Big/\Gamma(1\!\!+\!\!\frac{2}{\beta})
\Gamma(1\!\!-\!\!\frac{2}{\beta})\gamma_{t\!h}\!^{\frac{2}{\beta}}\!\right\}}
{(1+\gamma_{t\!h})[\mathcal{Z}(\gamma_{t\!h})\!+\!1]} \,d\gamma_{t\!h}.
\end{align}
\end{theorem}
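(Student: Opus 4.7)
The plan is to follow the same recipe that establishes Theorem 1: substitute the two CDFs, specialize to the interference-limited regime, and evaluate the remaining elementary integrals. First I would start from the unified expression \eqref{Capacity1} applied to the normal-transmission pair $(\gamma_u,\gamma_e)=(\gamma_{un},\gamma_{en})$, so that the target becomes
\[
\overline{\mathcal{C}}_{NT}=\frac{1}{\ln 2}\int_0^{\infty}\bigl[1-F_{\gamma_{un}}(\gamma_{th})\bigr]\frac{F_{\gamma_{en}}(\gamma_{th})}{1+\gamma_{th}}\,d\gamma_{th},
\]
and then feed in the CDFs from Lemmas 3 and 4.

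Next I would take the interference-limited limit, i.e.\ set $\sigma^{2}\to 0$, which cancels the $\sigma^{2}$-terms in the exponents of both \eqref{CDF_UN} and \eqref{CDF_EN}. For the legitimate side, \eqref{CDF_UN} then reduces to $1-F_{\gamma_{un}}(\gamma_{th})=2\pi\lambda_{b}\!\int_{0}^{\infty}\!x\,e^{-\pi\lambda_{b}(\mathcal{Z}(\gamma_{th})+1)x^{2}}\,dx$, which is a standard Gaussian moment and collapses to $[\mathcal{Z}(\gamma_{th})+1]^{-1}$. This gives exactly the first factor in the denominator of \eqref{C_UN}.

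For the eavesdropper side, the key algebraic observation is $[\gamma_{th}x^{\beta}]^{2/\beta}=\gamma_{th}^{2/\beta}x^{2}$, so after the $\sigma^{2}\to 0$ reduction the inner integral in \eqref{CDF_EN} is again a simple Gaussian one, $\int_{0}^{\infty}x\,e^{-\pi\lambda_{b}\Gamma(1+\tfrac{2}{\beta})\Gamma(1-\tfrac{2}{\beta})\gamma_{th}^{2/\beta}x^{2}}\,dx=\bigl[2\pi\lambda_{b}\Gamma(1+\tfrac{2}{\beta})\Gamma(1-\tfrac{2}{\beta})\gamma_{th}^{2/\beta}\bigr]^{-1}$. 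Plugging this into the outer exponential produces the numerator $\exp\{-(\lambda_{e}/\lambda_{b})/[\Gamma(1+\tfrac{2}{\beta})\Gamma(1-\tfrac{2}{\beta})\gamma_{th}^{2/\beta}]\}$ appearing in \eqref{C_UN}.

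The final step is purely mechanical: multiply the three pieces under the $\gamma_{th}$-integral and identify the result with the claimed expression. I do not foresee a genuine obstacle here, because unlike Theorem 1 there is no split at a finite upper cutoff $\gamma_{th_{0}}$ (no self-interference term $(1-\theta)P|h|^{2}d^{-\beta}$ is present in \eqref{SINR_EN}, so $F_{\gamma_{en}}$ remains strictly below $1$ for all $\gamma_{th}$ and a single integration range $(0,\infty)$ suffices); the only thing to be careful about is the simplification $[\gamma_{th}x^{\beta}]^{2/\beta}=\gamma_{th}^{2/\beta}x^{2}$, which is what makes the Gaussian integrals tractable and ultimately yields the closed-form integrand in \eqref{C_UN}.
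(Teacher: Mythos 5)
Your proposal is correct and follows exactly the route the paper takes (substituting the CDFs of Lemmas 3 and 4 into \eqref{Capacity1} and dropping the $\sigma^{2}$ terms in the interference-limited regime); you simply make explicit the two Gaussian integrals $\int_{0}^{\infty}x e^{-a x^{2}}dx=\tfrac{1}{2a}$ that the paper leaves implicit. Your remark that no split at $\gamma_{th_{0}}$ is needed here, in contrast to Theorem 1, is also accurate, since $F_{\gamma_{en}}$ never saturates at $1$ for finite $\gamma_{th}$.
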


\begin{proof}
By substituting $(\ref{CDF_UN})$ and $(\ref{CDF_EN})$ into $(\ref{Capacity1})$, we obtain this theorem.
\end{proof}

By comparing $(\ref{C_UC})$ and $(\ref{C_UN})$, we can find that $\overline{\mathcal{C}}_{ST}$ and $\overline{\mathcal{C}}_{NT}$ are both dependent on $\lambda_{e}/\lambda_{b}$, while $\overline{\mathcal{C}}_{ST}$ is also dependent on the power allocation ratio $\theta$ and the ratio of BS in three different states $\lambda_{b_i}/\lambda_{b}$, $i=1,3$. Note that $\lambda_{b_i}/\lambda_{b}$, $i=1,3$, are related to the cache-user ratio $\alpha$ and the cache hit ratio $\delta$. Numerical results will be given in Section V to show the effects of these parameters.
\subsection{Secrecy Coverage Probability}
Let $R_s$ be a given secrecy rate threshold. The delivery is securely successful when the instantaneous secrecy rate $\mathcal{C}$ is larger than the threshold $R_s$. Thus, the secrecy coverage probability can be expressed as
\begin{align}\label{p_coverage}
\mathcal{P}\!&\triangleq\!P_{r}(\mathcal{C}>\!R_s)\!=\!P_{r}[\log_2(1\!\!+\!\gamma_{u})\!-\!\log_2(1\!+\!\gamma_{e})\!\!>\!\!R_s] \nonumber\\
&=\mathbb{E}_{\gamma_{u},\gamma_{e}}\!\left\{\textbf{1}_{[\gamma_{u}>(1+\gamma_{e})2^{R\!_s}-1]}\right\} \nonumber\\
&=\int_0^{\infty}\!\!\!\!\int_{(1+\gamma_{e}\!)2^{R\!_s}\!-1}^{\infty}f_{\gamma_{u}}\!(\gamma_{1})f_{\gamma_{e}}\!(\gamma_{2})\,d\gamma_{1}\!\,d\gamma_{2}\nonumber\\
&=\int_0^{\infty}\!\!\!\!f_{\gamma_{e}}\!(\gamma_{th})\!\big\{\!1\!-\!F_{\gamma_{u}}\![2^{R_s}\!(1\!+\!\gamma_{th}\!)\!-\!1]\big\}\! \,d\gamma_{th},
\end{align}
where $f_{\gamma_{u}}\!(\gamma_{u})$ and $f_{\gamma_{e}}\!(\gamma_{e})$ are the probability distribution functions (PDFs) of $\gamma_{u}$ and $\gamma_{e}$, respectively.
\begin{theorem}
In the interference-limited scenario with the secure transmission, the secrecy coverage probability is
\begin{align}\label{PST}
\mathcal{P}\!_{S\!T}&\!=\!\!\int_0^{\gamma_{t\!h\!_{0}}}\!\!\!\bigg\{\frac{\exp\!\!\big[\!-\frac{\lambda_{e}}{\lambda_{b}}\Big/\Gamma(1\!\!+\!\!\frac{2}{\beta})
\Gamma(1\!\!-\!\!\frac{2}{\beta})[\frac{\gamma_{t\!h}}{\theta-(1-\theta)\gamma_{t\!h}}]\!^{\frac{2}{\beta}}\!\big]}
{\mathcal{G}(\!{\gamma_{th}}\!)\frac{\lambda_{b_{1}}}{\lambda_b}\!+\!\mathcal{G}(\!\frac{\gamma_{th}}{\theta}\!)
\frac{\lambda_{b_{3}}}{\lambda_b}\!+\!1} \nonumber \\
&\quad\ \ \frac{2\lambda_{e}\theta\gamma_{th}^{-\frac{\beta+2}{\beta}}}{\beta\lambda_{b}\Gamma(1+\frac{2}{\beta})\Gamma(1-\frac{2}{\beta})
[\frac{\gamma_{t\!h}}{\theta-(1-\theta)\gamma_{t\!h}}]^{\frac{\beta-2}{\beta}}}\bigg\} \,d\gamma_{th},
\end{align}
where $\mathcal{G}(\!\gamma_{t\!h}\!)$ is given as
\begin{equation}
\mathcal{G}(\!\gamma_{t\!h}\!)\!=\!
{[(1\!+\!\gamma_{th}\!)2^{R_s}\!\!-\!1]}^{\frac{2}{\beta}}\!\!\!\int_{{[(1\!+\!\gamma_{th}\!)2^{R_s}\!\!-\!1]}^{\frac{2}{\beta}}}^
{\infty}\!\frac{1}{1\!+\!x^{\frac{\beta}{2}}}dx.
\end{equation}
\end{theorem}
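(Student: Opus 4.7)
The plan is to apply the general identity~(\ref{p_coverage}) with $\gamma_u=\gamma_{uc}$ and $\gamma_e=\gamma_{ec}$, feeding in Lemmas~1 and~2 after specialising them to the interference-limited regime. The first step is to invoke Remark~1: since $\gamma_{ec}\leq\gamma_{th_0}=\theta/(1-\theta)$, the density $f_{\gamma_{ec}}$ is supported on $[0,\gamma_{th_0}]$, so the outer integral in~(\ref{p_coverage}) collapses to $\int_0^{\gamma_{th_0}}$.

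Second, setting $\sigma^2\to 0$ in~(\ref{CDF_UC}) and~(\ref{CDF_EC_2}) turns each inner $x$-integrand into $x\exp(-a x^2)$ (since $(x^\beta)^{2/\beta}=x^2$), so each collapses to $1/(2a)$. Applied to Lemma~1 at the shifted argument $\gamma'=2^{R_s}(1+\gamma_{th})-1$, this yields
\[
1-F_{\gamma_{uc}}(\gamma')=\frac{1}{\mathcal{Z}(\gamma')\tfrac{\lambda_{b_1}}{\lambda_b}+\mathcal{Z}(\gamma'/\theta)\tfrac{\lambda_{b_3}}{\lambda_b}+1},
\]
and applied to Lemma~2 it produces $\widetilde{F}_{\gamma_{ec}}(\gamma_{th})=\exp\bigl(-C\,u(\gamma_{th})^{-2/\beta}\bigr)$ with $u(\gamma_{th})=\gamma_{th}/[\theta-(1-\theta)\gamma_{th}]$ and $C=(\lambda_e/\lambda_b)/[\Gamma(1+\tfrac{2}{\beta})\Gamma(1-\tfrac{2}{\beta})]$, which is already the exponential factor in~(\ref{PST}). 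To match the $\mathcal{G}$ notation I would rewrite $\mathcal{Z}$ in its tail-integral form $\mathcal{Z}(\gamma)=\gamma^{2/\beta}\!\int_{\gamma^{-2/\beta}}^{\infty}(1+u^{\beta/2})^{-1}\,du$, obtained by the standard Euler-integral rewriting of ${}_2F_1$, and then read off $\mathcal{Z}(\gamma')\equiv\mathcal{G}(\gamma_{th})$ (and likewise with the $\theta$-scaled argument).

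Third, I would differentiate $\widetilde{F}_{\gamma_{ec}}$ by the chain rule to get
\[
f_{\gamma_{ec}}(\gamma_{th})=\widetilde{F}_{\gamma_{ec}}(\gamma_{th})\cdot\frac{2C}{\beta}\,u(\gamma_{th})^{-(\beta+2)/\beta}\,u'(\gamma_{th}),
\]
plug in $u'(\gamma_{th})=\theta/[\theta-(1-\theta)\gamma_{th}]^2$, and then multiply by the expression for $1-F_{\gamma_{uc}}(\gamma')$ derived above. Integrating the product over $[0,\gamma_{th_0}]$ gives precisely~(\ref{PST}).

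The main obstacle I anticipate is not conceptual but algebraic bookkeeping: keeping track of the $(\beta\pm 2)/\beta$ exponents after differentiating $u^{-2/\beta}$ and recombining them with $u'(\gamma_{th})$ to reach the exact prefactor displayed in~(\ref{PST}), together with verifying that the Euler substitution really does identify $\mathcal{Z}$ evaluated at the shifted argument $\gamma'=2^{R_s}(1+\gamma_{th})-1$ with $\mathcal{G}(\gamma_{th})$. Once those two identifications are nailed down, the theorem follows from~(\ref{p_coverage}) by direct substitution.
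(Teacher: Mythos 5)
Your proposal follows exactly the paper's own route---restrict the outer integral in (\ref{p_coverage}) to $[0,\gamma_{th_0}]$ via Remark~1, let $\sigma^2\to 0$ so that each inner $x$-integral in Lemmas~1 and~2 collapses to $1/(2a)$, identify $\mathcal{Z}$ with its tail-integral form, differentiate $\widetilde{F}_{\gamma_{ec}}=\exp(-C\,u^{-2/\beta})$ by the chain rule with $u'=\theta/[\theta-(1-\theta)\gamma_{th}]^2$, and substitute together with $1-F_{\gamma_{uc}}$ evaluated at $2^{R_s}(1+\gamma_{th})-1$---and every step you spell out is sound. The one caveat is that finishing your bookkeeping yields the density prefactor $\frac{2\lambda_e\theta}{\beta\lambda_b\Gamma(1+\frac{2}{\beta})\Gamma(1-\frac{2}{\beta})}\,\gamma_{th}^{-2}\big[\tfrac{\gamma_{th}}{\theta-(1-\theta)\gamma_{th}}\big]^{\frac{\beta-2}{\beta}}$ rather than the $\gamma_{th}^{-\frac{\beta+2}{\beta}}\big[\cdot\big]^{-\frac{\beta-2}{\beta}}$ appearing in (\ref{PST}), and the lower limit in the stated $\mathcal{G}$ should carry exponent $-\tfrac{2}{\beta}$ for it to equal $\mathcal{Z}$ at the shifted argument, so the residual mismatch lies in the paper's displayed formula (apparent typos), not in your argument.
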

\begin{proof}
By differentiating $(\ref{CDF_EC})$ to get $f_{\gamma_{e}}\!(\gamma_{e})$, then substituting $f_{\gamma_{e}}\!(\gamma_{e})$ and $(\ref{CDF_UC})$ into $(\ref{p_coverage})$, we obtain this theorem.
\end{proof}

\begin{theorem}
In the interference-limited scenario with the normal transmission, the secrecy coverage probability is
\begin{align}\label{PNT}
\mathcal{P}\!_{N\!T}&\!=\!\!\int_0^{\infty}\!\!\!\bigg\{\!\frac{\exp\!\big[\!-\!\frac{\lambda_{e}}{\lambda_{b}}\Big/\Gamma(1\!\!+\!\!\frac{2}{\beta})
\Gamma(1\!\!-\!\!\frac{2}{\beta})\gamma_{t\!h}\!^{\frac{2}{\beta}}\!\big]}
{\mathcal{G}({\gamma_{th}})+1} \nonumber \\
&\qquad\quad\!\frac{2\lambda_{e}\gamma_{th}^{-\frac{\beta+2}{\beta}}}{\beta\lambda_{b}
\Gamma(\!1+\frac{2}{\beta}\!)\Gamma(1-\frac{2}{\beta})
}\bigg\} \,d\gamma_{th}
\end{align}
\end{theorem}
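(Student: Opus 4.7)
The plan is to mirror the proof of Theorem 3, with the simplifications afforded by the absence of power splitting in the normal-transmission setup. I would begin by specializing Lemmas 3 and 4 to the interference-limited regime, i.e., by dropping the $\sigma^2/P$ terms in (\ref{CDF_UN}) and (\ref{CDF_EN}). In (\ref{CDF_UN}) the inner integrand reduces to $x\exp[-\pi\lambda_b x^2(\mathcal{Z}(\gamma_{th})+1)]$, which is a pure Gaussian and evaluates via $\int_0^\infty x\exp(-ax^2)\,dx = 1/(2a)$ to the clean CCDF
\begin{equation*}
1-F_{\gamma_{un}}(\gamma_{th}) \;=\; \frac{1}{\mathcal{Z}(\gamma_{th})+1}.
\end{equation*}
The analogous reduction in (\ref{CDF_EN}) uses $[\gamma_{th}x^\beta]^{2/\beta}=\gamma_{th}^{2/\beta}x^2$, so that the inner integral is again Gaussian and collapses $F_{\gamma_{en}}$ to the double-exponential form $\exp(-A\,\gamma_{th}^{-2/\beta})$, with $A=\lambda_e/[\lambda_b\,\Gamma(1+2/\beta)\Gamma(1-2/\beta)]$.

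Next, following the same recipe used for Theorem 3, I would differentiate this reduced $F_{\gamma_{en}}$ with respect to $\gamma_{th}$ to extract the PDF
\begin{equation*}
f_{\gamma_{en}}(\gamma_{th}) \;=\; \frac{2A}{\beta}\,\gamma_{th}^{-(\beta+2)/\beta}\,\exp(-A\,\gamma_{th}^{-2/\beta}),
\end{equation*}
which directly supplies both the exponential kernel and the prefactor $2\lambda_e\gamma_{th}^{-(\beta+2)/\beta}/[\beta\lambda_b\Gamma(1+2/\beta)\Gamma(1-2/\beta)]$ visible in (\ref{PNT}).

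Then I would substitute the two ingredients into the secrecy-coverage integral (\ref{p_coverage}), shifting the CCDF argument to $2^{R_s}(1+\gamma_{th})-1$. This yields an integrand of the form $f_{\gamma_{en}}(\gamma_{th})/[\mathcal{Z}(2^{R_s}(1+\gamma_{th})-1)+1]$. The final step is to rewrite the hypergeometric $\mathcal{Z}$ at the shifted argument via its Euler-type integral representation and the substitution $u\mapsto x^{\beta/2}$, so that it takes the explicit integral form of $\mathcal{G}(\gamma_{th})$ defined in Theorem 3.

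The main obstacle is that last rewriting step — expressing ${}_2F_1[1,1-2/\beta;\,2-2/\beta;\,-y]$ as the tail integral $y^{2/\beta}\int\!dx/(1+x^{\beta/2})$ appearing in $\mathcal{G}$ requires careful bookkeeping of the constants so that the $(\beta-2)$ factor in $\mathcal{Z}$ absorbs correctly into the measure change. Everything else is a routine specialization of the secure-transmission argument ($\theta=1$, single interference tier $\Phi_b$), so once the $\mathcal{Z}\leftrightarrow\mathcal{G}$ identification is in place the remaining algebra just strips the factors that match (\ref{PNT}).
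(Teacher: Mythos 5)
Your proposal is correct and takes essentially the same route as the paper's one-line proof: specialize $(\ref{CDF_UN})$ and $(\ref{CDF_EN})$ to the interference-limited regime, evaluate the resulting Gaussian integrals to get $1-F_{\gamma_{un}}=1/(\mathcal{Z}(\gamma_{th})+1)$ and $F_{\gamma_{en}}=\exp(-A\gamma_{th}^{-2/\beta})$, differentiate the latter, and substitute into $(\ref{p_coverage})$. The step you flag as the main obstacle is not really one: the tail-integral form of $\mathcal{Z}$ is already available from step (c) of Appendix A, namely $\mathcal{Z}(y)=y^{2/\beta}\int_{y^{-2/\beta}}^{\infty}(1+x^{\beta/2})^{-1}\,dx$, which is precisely the paper's $\mathcal{G}$ evaluated at the shifted argument $2^{R_s}(1+\gamma_{th})-1$ (up to what appears to be a sign typo in the exponent of the lower integration limit in the paper's definition of $\mathcal{G}$), so no separate hypergeometric manipulation is needed.
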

\begin{proof}
By differentiating $(\ref{CDF_EN})$ to get $f_{\gamma_{e}}\!(\gamma_{e})$, then substituting $f_{\gamma_{e}}\!(\gamma_{e})$ and $(\ref{CDF_UN})$ into $(\ref{p_coverage})$, we get this theorem.
\end{proof}
\section{Numerical Results}
\begin{figure}[t]
\centering
\includegraphics[width=3.1in,height=2.2in]{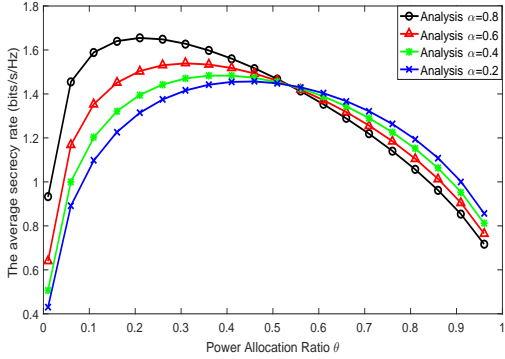}
\vspace{-2mm}
\caption{ The average secrecy rate $\overline{\mathcal{C}}_{ST}$ v.s. power allocation ratio $\theta$}\label{c_p}
\label{optimalpower}
\vspace{1.5mm}
\end{figure}
\begin{figure}[t]
\centering
\includegraphics[width=3.1in,height=2.2in]{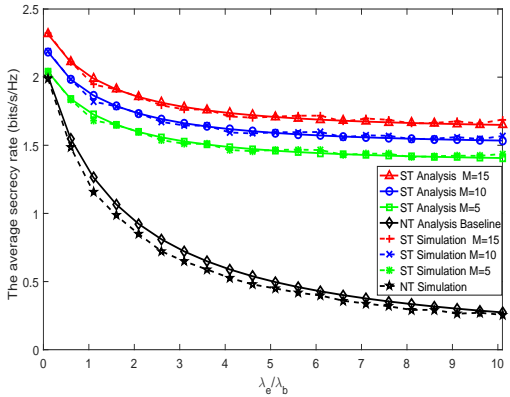}
\vspace{-2mm}
\caption{ The average secrecy rate $\overline{\mathcal{C}}$ v.s. $\lambda_{e}/\lambda_{b}$ }\label{capacity}
\label{capacity}
\vspace{1.5mm}
\end{figure}
In this section, numerical results are provided to evaluate the performance of the proposed transmission schemes. The BSs, ERs and users are distributed based on PPPs with density $\{\lambda_{b},\lambda_{e},\lambda_{u}\}=\{1,5,100\}/km^2$ in the simulation. We consider the transmission power $P=30~dBm$ and the noise power $\sigma^2=-174~dBm$. We consider the path loss exponent $\beta=4$, the total number of files $N=100$, the cache size $M=5$, the power allocation ratio $\theta=0.5$, and the cache user ratio $\alpha=0.5$. In the simulation, the file popularity distribution is modeled as Zipf distribution, i.e., the requested probability of the $i$-th ranked file is given by $p_{i}=\frac{1/{i^{\eta}}}{\sum_{j=1}^{N}{1/j^{\eta}}}$ where $\eta\geq0$ characterizes the skew of the popularity distribution. We use $\eta=0.8$ in the simulation. These parameters will not change unless specified otherwise.

In Fig.\ref{optimalpower}, the average secrecy rate of the secure transmission $\overline{\mathcal{C}}_{ST}$ versus the power allocation ratio $\theta$ is illustrated. It can be seen that there exists an optimal $\theta^*$ to achieve the maximal $\overline{\mathcal{C}}\!_{ST}$ for a given $\alpha$, and different $\alpha$ has different $\theta^*$. As presented in (\ref{C_UC}), $\overline{\mathcal{C}}\!_{ST}$ cannot be expressed in a closed form. As such, we cannot derive $\theta^*$ in theory. We can observe that the average secrecy rate $\overline{\mathcal{C}}\!_{ST}$ first increase with $\theta$ when $\theta\in(0,\theta^*)$, then decreases with $\theta$ when $\theta\in(\theta^*,1)$. This interesting phenomenon can be well explained from $(\ref{SINR_UC})$ and $(\ref{SINR_EC})$. The increase of $\theta$ improves the SINRs of both user and ER, but the increment at user is dominant in $(0,\theta^*)$. When $\theta$ is getting larger, the $\overline{\mathcal{C}}\!_{ST}$ will be compromised due to the growing effects of eavesdropping. We can also obtain that the secure transmission can achieve better optimal $\overline{\mathcal{C}}$ in larger $\alpha$ scenario, because more secure transmissions occurs in the network.



\begin{figure}[t]
\centering
\includegraphics[width=3.1in,height=2.2in]{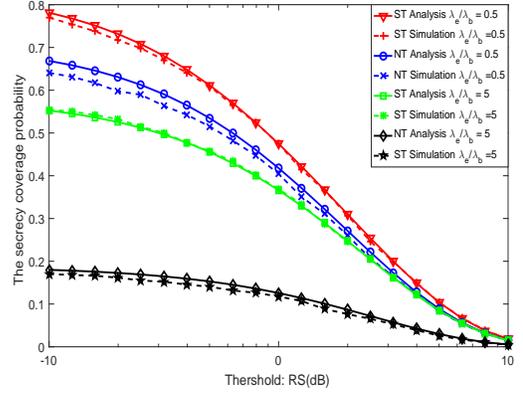}
\vspace{-2mm}
\caption{ The secrecy coverage probability vs. thershold}\label{c_p}
\label{pcoverage}
\vspace{1.5mm}
\end{figure}
In Fig.\ref{capacity}, the average secrecy rate $\overline{\mathcal{C}}$ versus the density ratio of $\frac{\lambda_{e}}{\lambda_{b}}$ with different cache size $M$ is illustrated. Note from (\ref{C_UN}) that $\overline{\mathcal{C}}_{NT}$ is only depend on $\frac{\lambda_{e}}{\lambda_{b}}$ which is considered as a baseline. We can see that with the increase of $\frac{\lambda_{e}}{\lambda_{b}}$, the $\overline{\mathcal{C}}$ is decreased for both with and without cache-enabled transmission schemes, which indicates that more ERs cause more serious eavesdropping. It should be highlighted that, even with $\frac{\lambda_{e}}{\lambda_{b}}\!=\!10$, the $\overline{\mathcal{C}}_{ST}$ is still above $1.5~bits/\!s/\!Hz$ which only reduced $25\%$ from above $2~bits/\!s/\!Hz$ when $\frac{\lambda_{e}}{\lambda_{b}}=0.1$, while $\overline{\mathcal{C}}\!_{N\!T}$ reduces to $0.3~bits/\!s/\!Hz$ from $2~bits/\!s/\!Hz$, i.e., reduced by $85\%$. In addition, we can also observe that the $\overline{\mathcal{C}}\!_{S\!T}$ improves with increasing cache size $M$, due to more interference signal can be cancelled with larger ratio of $\Phi_{b_1}$ and $\Phi_{b_2}$.

In Fig.\ref{pcoverage}, the secrecy coverage probability $\mathcal{P}$ for various $\frac{\lambda_{e}}{\lambda_{b}}$ is presented. We can observe that $\mathcal{P}_{ST}$ is much higher than $\mathcal{P}_{NT}$ for both $\frac{\lambda_{e}}{\lambda_{b}}=5$  and $0.5$, which indicates the promising effect of secure transmission in both ER-dense scenario and ER-sparse scenario. Similar as Fig.\ref{capacity}, we can also see that more ERs cause more serious eavesdropping leading lower secrecy coverage probability. The simulation results are presented along with the theoretical ones in Fig.\ref{capacity} and Fig. \ref{pcoverage}. We can see from the figures that the theoretical results are in excellent agreement with the simulation results.
\section{Conclusion}
In this paper, we reveal that the caching ability of users can be used to improve the transmission security for physical layer security in the wireless cache-enabled HetNet. The corresponding secure transmission scheme is developed, where the transmitter combines the message signal with the pre-cached file. This scheme can introduce extra interference at ER, but this interference can be cancelled at the cache-enabled users. Based on stochastic geometry, we derive the expression of average secrecy rate and secrecy coverage probability for the secure transmission and the normal transmission. Finally, we show that the secure transmission achieves a significant security gain than the normal transmission.
\begin{appendix}
\subsection{Proof of Lemma 1}
By replacing the distance $d_{u_{0},b_{0}}$ between the typical user and its nearest BS with $x$, the PDF of $x$ is $f_{X}(x)\!=\!2\!\pi\lambda_{b}x\exp\{-\pi\lambda_{b}x^2\}$\cite{andrews2011tractable}. Then the CDF of $\gamma_{uc}$ is derived as
\begin{align}\label{F_UC}
 &F_{\gamma_{uc}}(\gamma_{th})\!\triangleq\!\!P_{r}[S\!I\!N\!R_{uc}\leq\gamma_{th}]\!=\!\mathbb{E}_{x}[S\!I\!N\!R_{uc}\leq\gamma_{th}|d_{u_{0},b_{0}}\!=\!x]\nonumber\\
 &=\int_0^{+\infty}\!\!\!P_{r}\!\Big[\frac{\theta\!P\!\mid\! h_{u_{0},b_{0}}\!\mid^{2}\!x^{-\beta}}{\theta\!P\!I_{\Phi\!_{b_{1}}}\!\!+P\!I_{\Phi\!_{b_{3}}}\!\!+\!\sigma^{2}}\leq\gamma_{th}\Big]\!f_{X}(x)\,dx\nonumber\\
 &=\int_0^{+\infty}\!\!\!\!P_{r}\Big[\!\!\mid\! h_{u_{0},b_{0}}\!\mid^{2}\leq\!\gamma_{th}x^{\beta}(I_{\Phi\!_{b_{1}}}\!\!+\!\frac{I_{\Phi\!_{b_{3}}}}{\theta}\!+\!\frac{\sigma^{2}}{\theta\! P})\Big]\!f_{X}(x)\,dx\nonumber\\
 &\overset{(a)}{=}1\!{\setlength\arraycolsep{0.3pt}-}\!\!\int_0^{\infty}\!\!\!\!\mathcal{L}_{I_{\Phi\!_{b_{1}}}}\!\!(\gamma_{th}x^{\beta})\mathcal{L}_{I_{\Phi\!_{b_{3}}}}\!\!(\frac{\gamma_{th}x^{\beta}}{\theta})
 e^{-\frac{\gamma_{th}x^{\beta}\!\sigma^{2}}{\theta\!P}}\!f\!_{X}(x)\,dx,
\end{align}
where Step (a) follows from $\mid\! h_{u_{0},b_{0}}\!\mid^{2}\sim\exp(1)$. Under the condition of $d_{u_{0},b_{0}}\!=\!x$, the remaining interferences resulted from $\Phi\!_{b_{1}}$ and $\Phi\!_{b_{3}}$ are spatially located at the outside of the circle centered at $u_{0}$ with radius $x$ denoted as $\mathcal{C}_{(u_0,x)}$. Therefore the Laplace transform $\mathcal{L}_{I_{\Phi\!_{b_{1}}}}$ is derived as
\begin{align}\label{L_I_B1}
&\mathcal{L}_{I_{\Phi\!_{b_{1}}}}\![\gamma_{th}x^{\beta}]
=\mathbb{E}_{I_{\Phi\!_{b_{1}}}}\!\!\Big[\exp({-\gamma_{th}x^{\beta}\!\!\!\!\!\sum\limits_{j\in\{\Phi\!_{b\!_{1}}\!\!\backslash b_{0}\}}\!\!\!\!\!d_{u_{0},b_{j}}^{-\beta}\!\!\mid\!\!h_{u_{0},b_{j}}\!\!\mid^{2}})\Big]\nonumber\\
&=\mathbb{E}_{I_{\Phi\!_{b_{1}}}}\!\!\Big\{\!\prod_{j\in\{\Phi\!_{b_{1}}\!\!\backslash b_{0}\!\}}\!\!\!\!\!\!\big[\!\exp\!\big(\!-\!\gamma_{th}x^{\beta}d_{u_{0},b_{j}}^{-\beta}\!\!\mid\!h\!_{u_{0},b_{j}}\!\!\mid^{2})\big]\!\Big\}\nonumber\\
&\overset{(a)}{=}\!\exp\!\bigg\{\!\!-\!\lambda_{b_{1}}\!\!\!\int_{R^2\backslash\mathcal{C}_{(u_0,x)}}\!\!\!\!\Big[1\!-\!\mathbb{E}_{\mid h_{u_{0},b_{j}}\mid^{2}}(e^{-\gamma_{th}x^{\beta}r^{-\beta}})\Big]\!\,dr\!\bigg\}\nonumber\\
&\overset{(b)}{=}\exp\!\Big[\!-\!2\pi\lambda_{b_{1}}\!\!\int_{x}^{\infty}\!\!\frac{v}{1+(\gamma_{th}x^{\beta})^{-1}v^\beta}\,dv\Big]\nonumber\\
&\overset{(c)}{=}\exp\!\big[\!-\!\pi\lambda_{b_{1}}x^2\gamma_{th}^{\frac{2}{\beta}}\int_{\gamma_{th}^{-\frac{2}{\beta}}}^{\infty}
\frac{\,dy}{1+y^{\frac{\beta}{2}}}\big]\nonumber\\
&=\exp\!\big[\!-\!\pi\lambda_{b_{1}}x^2\mathcal{Z}(\gamma_{th})\big],
\end{align}
where step (a) follows from the probability generating functional (PGFL) of PPP, step (b) is obtained by converting the cartesian coordinates into polar coordinates, step (c) is obtained by replacing $(\gamma_{th}x^{\beta})^{-\frac{2}{\beta}}v^2$ with $y$.

Similarly, the Laplace transform of the $\mathcal{L}_{I_{\Phi\!_{b_{3}}}}$ is
\begin{align}\label{L_I_B3}
\mathcal{L}_{I_{\Phi\!_{b_{3}}}}\!\big[\frac{\gamma_{th}x^{\beta}}{\theta}\big]
&=\mathbb{E}_{I_{\Phi\!_{b_{3}}}}\!\!\Big[\exp({-\frac{\gamma_{th}x^{\beta}}{\theta}\!\!\sum\limits_{l\in\Phi\!_{b\!_{3}}\!\!}\!\!d_{u_{0},b_{l}}^
{-\beta}\!\!\mid\!\!h_{u_{0},b_{l}}\!\!\mid^{2}})\Big]\nonumber\\
&=\exp\!\big[\!-\!\pi\lambda_{b_{3}}x^2\mathcal{Z}(\frac{\gamma_{th}}{\theta})\big].
\end{align}

By substituting $(\ref{L_I_B1})$, $(\ref{L_I_B3})$ into $(\ref{F_UC})$, we can obtain Lemma 1 and the proof is completed.
\subsection{Proof of Lemma 2}
Let us replace the distance $d_{e_{i},b_{0}}$ with $x$. When $\gamma_{ec}\leq\gamma_{th_{0}}$, by substituting $(\ref{SINR_EC})$ into $(\ref{max_SINR_EC})$, the CDF of $\gamma_{ec}$ is
\begin{align}\label{F_EC}
&F\!_{\gamma_{ec}}\!(\gamma_{th})\triangleq P_{r}\big\{\max\limits_{e_{i}\in\Phi_{e}}[S\!I\!N\!R_{e_{i}}\leq\gamma_{th}]\big\}\nonumber\\
&=P_{r}\!\bigg\{\!\max\limits_{e_{i}\in\Phi_{e}}\!\Big[\frac{\theta P\!\mid\! \!h_{e_{i},b_{0}}\!\!\mid^{2}\!d_{e_{i},b_{0}}^{-\beta}}{P\!I_{\Phi\!_{b}}\!+\sigma^{2}\!+\!(1\!-\theta)P\!\mid\! h_{e_{i},b_{0}}\!\!\mid^{2}\!d_{e_{i},b_{0}}^{-\beta}}\leq\gamma_{th}\Big]\bigg\} \nonumber\\
&=\mathbb{E}_{\Phi\!_{e}}\!\bigg[\!\prod_{i\in\Phi\!_{e}}\!\!\!P_{r}\!\Big(\frac{\theta P\!\mid\! h_{e_{i},b_{0}}\!\!\mid^{2}\!d_{e_{i},b_{0}}^{-\beta}}{P\!I_{\Phi\!_{b}}\!+\sigma^{2}\!+\!(1\!-\!\theta)P\!\!\mid\!\! h_{e_{i},b_{0}}\!\!\mid^{2}\!d_{e_{i},b_{0}}^{-\beta}}\leq\gamma_{th}\Big)\!\bigg]\nonumber\\
&\overset{(a)}{=}\!\exp{\!\bigg\{\!\!\!-\!\lambda_{e}\!\!\!\int_{R^2}\!\!\!\!\!1\!\!-\!P_{r}\!\Big[\!\!\!\mid\! h_{e_{i},b_{0}}\!\!\mid^{2}\leq\!\frac{(\!P\!I_{\Phi\!_{b}}\!\!+\!\sigma^{2})\gamma_{th}r^\beta}
{[\theta-\!(1\!-\theta)\gamma_{th}]P}\!\Big]\!\,dr\!\!\bigg\}} \nonumber\\
&\overset{(b)}{=}\exp{\bigg\{\!\!-\!2\pi\lambda_{e}\!\!\!\int_0^{\infty}\!\!\!\!x\mathcal{L}_{I_{\Phi\!_{b}}}
\!\!\Big[\frac{\gamma_{th}x^{\ \beta}}{[\theta-\!(1\!-\theta)\gamma_{th}]}\Big]\!
\mathrm{e^{\!-\!\frac{\sigma^{2}\gamma_{th}\!x^{\beta}}{[\theta-(1-\theta)\gamma_{th}]\!P}}}\!\,dx\!\bigg\}} 
\end{align}
where step (a) follows from the PGFL of PPP, step (b) is obtained by converting the cartesian coordinates into polar coordinates.

Note that the interference comes from $\{\Phi\!_{b}\backslash b_{0}\}$, which is the reduced Palm distribution of PPP $\Phi\!_{b}$. According to Slivnyak-Mecke theorem \cite{chiu2013stochastic}, the reduced Palm distribution of PPP is equivalent of its original distribution, i.e.$\{\Phi\!_{b}\backslash b_{0}\}=\Phi\!_{b}$ as illustrated in \cite{tang2016jammer}. Denoting $S=\frac{\gamma_{th}x^\beta}{[\theta-(1-\theta)\gamma_{th}]}$, the Laplace transform of interference $I_{\Phi_{b}}(S)$ is derived as
\begin{align}\label{L_B}
&\mathcal{L}_{I_{\Phi\!_{b}}}\!(S)\!=\mathbb{E}_{I_{\Phi\!_{b}}}\big[\mathbf{e}^{-SI_{\Phi\!_{b}}}\big] \nonumber\\
&=\mathbb{E}_{I_{\Phi\!_{b}}}\!\Big[\exp{\!\big(\!-\!S\!\!\!\!\sum_{j\in\{\Phi\!_{b}
\backslash b_{0}\}}\!\!\!\!\mid\!\!h_{e_{i},b_{j}}\!\!\mid^{2}\!\!d_{e_{i},b_{j}}^{-\beta}\big)}\Big] \nonumber\\
&\overset{(a)}{\!=\!}\!\exp{\!\left\{\!\!-\!\lambda_{b}\!\!\int_{R^2}\!\!\!\!1\!\!-\!\mathbb{E}_{\mid h_{e_{i},b_{j}}\!\mid^2}\!\!\left[\exp{\!\big(\!\!-\!S\!\mid\! \!h_{e_{i},b_{j}}\!\!\mid^{2}\!\!d_{e_{i},b_{j}}^{-\beta}\big)}\!\right]\!\!\,d(d_{e_{i},b_{j}})\!\right\}} \nonumber\\
&\overset{(b)}{=}\exp\!{\Big\{\!\!-\!2\pi\lambda_{b}\!\!\int_0^{\infty}\!\!\frac{v}{1+\frac{v^\beta}{S}}\,dv\Big\}}\nonumber\\
&=\exp\!\Big[\!-\!\pi\lambda_{b}\Gamma(1\!+\!\frac{2}{\beta})\Gamma(1\!-\!\frac{2}{\beta})S^{\frac{2}{\beta}}\Big],
\end{align}
where step (a) follows from the PGFL of PPP,
where step (b) is obtained by converting cartesian coordinates into polar coordinates.

By substituting $(\ref{L_B})$ into $(\ref{F_EC})$, we can get $\widetilde{F}_{\gamma_{ec}}\!(\gamma_{th}) \text{as}~(\ref{CDF_EC_2})$. When $\gamma_{ec}$ is larger than $\gamma_{th_{0}}$, it is clearly to note that $F_{\gamma_{ec}}(\gamma_{th})=1$. Then the proof is completed.
\end{appendix}
\bibliographystyle{IEEEtran}
\bibliography{refe}

\begin{thebibliography}{10}
\providecommand{\url}[1]{#1}
\csname url@samestyle\endcsname
\providecommand{\newblock}{\relax}
\providecommand{\bibinfo}[2]{#2}
\providecommand{\BIBentrySTDinterwordspacing}{\spaceskip=0pt\relax}
\providecommand{\BIBentryALTinterwordstretchfactor}{4}
\providecommand{\BIBentryALTinterwordspacing}{\spaceskip=\fontdimen2\font plus
\BIBentryALTinterwordstretchfactor\fontdimen3\font minus
  \fontdimen4\font\relax}
\providecommand{\BIBforeignlanguage}[2]{{%
\expandafter\ifx\csname l@#1\endcsname\relax
\typeout{** WARNING: IEEEtran.bst: No hyphenation pattern has been}%
\typeout{** loaded for the language `#1'. Using the pattern for}%
\typeout{** the default language instead.}%
\else
\language=\csname l@#1\endcsname
\fi
#2}}
\providecommand{\BIBdecl}{\relax}
\BIBdecl

\bibitem{shiu2011physical}
Y.-S. Shiu, S.~Y. Chang, H.-C. Wu, S.~C.-H. Huang, and H.-H. Chen, ``Physical
  layer security in wireless networks: A tutorial,'' \emph{IEEE Wireless
  Commun.}, vol.~18, no.~2, pp. 66--74, Apr. 2011.

\bibitem{wyner1975wire}
A.~D. Wyner, ``The wire-tap channel,'' \emph{Bell Syst. Tech. J.}, vol.~54,
  no.~8, pp. 1355--1387, 1975.

\bibitem{goel2008guaranteeing}
S.~Goel and R.~Negi, ``Guaranteeing secrecy using artificial noise,''
  \emph{IEEE Trans. Wireless Commun.}, vol.~7, no.~6, pp. 2180--2189, June.
  2008.

\bibitem{dong2010improving}
L.~Dong, Z.~Han, A.~P. Petropulu, and H.~V. Poor, ``Improving wireless physical
  layer security via cooperating relays,'' \emph{IEEE Trans. Signal Process.},
  vol.~58, no.~3, pp. 1875--1888, Mar. 2010.

\bibitem{andrews2011tractable}
J.~G. Andrews, F.~Baccelli, and R.~K. Ganti, ``A tractable approach to coverage
  and rate in cellular networks,'' \emph{IEEE Trans. Commun.}, vol.~59, no.~11,
  pp. 3122--3134, Nov. 2011.

\bibitem{zhang2013enhancing}
X.~Zhang, X.~Zhou, and M.~R. McKay, ``Enhancing secrecy with multi-antenna
  transmission in wireless ad hoc networks,'' \emph{IEEE Trans. Inf. Forensics
  Security}, vol.~8, no.~11, pp. 1802--1814, Nov. 2013.

\bibitem{wang2013physical}
H.~Wang, X.~Zhou, and M.~C. Reed, ``Physical layer security in cellular
  networks: A stochastic geometry approach,'' \emph{IEEE Trans. Wireless
  Commun.}, vol.~12, no.~6, pp. 2776--2787, Jun. 2013.

\bibitem{deng2016physical}
Y.~Deng, L.~Wang, M.~Elkashlan, A.~Nallanathan, and R.~K. Mallik, ``Physical
  layer security in three-tier wireless sensor networks: A stochastic geometry
  approach,'' \emph{IEEE Trans. Inf. Forensics Security}, vol.~11, no.~6, pp.
  1128--1138, Jun. 2016.

\bibitem{tang2016jammer}
W.~Tang, S.~Feng, Y.~Ding, and Y.~Liu, ``Jammer selection in heterogeneous
  networks with full-duplex users,'' in \emph{Global Commun. Conf. (GLOBECOM),
  2016 IEEE}.\hskip 1em plus 0.5em minus 0.4em\relax IEEE, Feb. 2017, pp. 1--6.

\bibitem{Mobile3C}
H.~Liu, Z.~Chen, and L.~Qian, ``The three primary colors of mobile systems,''
  \emph{IEEE Commun. Mag.}, vol.~54, no.~9, pp. 15--21, Sep. 2016.

\bibitem{Offloading}
C.~Yang, Y.~Yao, Z.~Chen, and B.~Xia, ``Analysis on cache-enabled wireless
  heterogeneous networks,'' \emph{IEEE Trans. Wireless Commun.}, vol.~15,
  no.~1, pp. 131--145, Jan 2016.

\bibitem{sengupta2015fundamental}
A.~Sengupta, R.~Tandon, and T.~C. Clancy, ``Fundamental limits of caching with
  secure delivery,'' \emph{IEEE Trans. Inf. Forensics Security}, vol.~10,
  no.~2, pp. 355--370, Feb. 2015.

\bibitem{yang2017interference}
C.~Yang, B.~Xia, W.~Xie, K.~Huang, Y.~Yao, and Y.~Zhao, ``Interference
  cancellation at receivers in cache-enabled wireless networks,'' \emph{IEEE
  Trans. Veh. Technol.}, Aug. 2017.

\bibitem{xiang2016cache}
L.~Xiang, D.~W.~K. Ng, R.~Schober, and V.~W. Wong, ``Cache-enabled
  physical-layer security for video streaming in wireless networks with limited
  backhaul,'' in \emph{GC Wkshps, 2016 IEEE}.\hskip 1em plus 0.5em minus
  0.4em\relax IEEE, Feb 2017, pp. 1--7.

\bibitem{elsawy2013stochastic}
H.~ElSawy, E.~Hossain, and M.~Haenggi, ``Stochastic geometry for modeling,
  analysis, and design of multi-tier and cognitive cellular wireless networks:
  A survey,'' \emph{IEEE Commun. Surveys \& Tutorials}, vol.~15, no.~3, pp.
  996--1019, June. 2013.

\bibitem{chiu2013stochastic}
S.~N. Chiu, D.~Stoyan, W.~S. Kendall, and J.~Mecke, \emph{Stochastic geometry
  and its applications}.\hskip 1em plus 0.5em minus 0.4em\relax John Wiley \&
  Sons, 2013.

\end{thebibliography}

\end{document}